\newcounter{intro}
\DeclareFontFamily{U}{mathx}{\hyphenchar\font45}
\DeclareFontShape{U}{mathx}{m}{n}{
      <5> <6> <7> <8> <9> <10>
      <10.95> <12> <14.4> <17.28> <20.74> <24.88>
      mathx10
      }{}
\DeclareSymbolFont{mathx}{U}{mathx}{m}{n}
\DeclareMathAccent{\widecheck}{0}{mathx}{"71}
\DeclareMathAccent{\wideparen}{0}{mathx}{"75}
\def\complex{\mathbb{C}}
\def\natural{\mathbb{N}}
\def\integer{\mathbb{Z}}
\def\field{\mathbb{F}}
\def\conj{^{\dagger}}
\def\to{\rightarrow}
\newcommand{\1}{\mathbb{1}}
\newcommand{\rowspan}[1]{\operatorname{span}\{#1\}}
\def \lket {\left|}
\def \rket {\right\rangle}
\def \lbra {\left\langle}
\def \rbra {\right|}
\newcommand{\ket}[1]{\lket\mspace{0.5mu} #1 \mspace{0.5mu}\rket}
\newcommand{\encket}[1]{\ket{\overline{#1}}}
\newcommand{\bra}[1]{\lbra\mspace{0.5mu} #1 \mspace{0.5mu}\rbra}
\newcommand{\kb}[1]{\ket{#1}\bra{#1}}
\def\logX{\overline{X}}
\def\logZ{\overline{Z}}
\def\dsl{\llbracket}
\def\dsr{\rrbracket}
\def\css{\sansserif{CSS}}
\def\rs{\sansserif{RS}}
\def\prs{\sansserif{PRS}}
\def\srs{\sansserif{SRS}}
\def\A{\mathcal{A}}
\def\B{\mathcal{B}}
\def\C{\mathcal{C}}
\def\G{\mathcal{G}}
\def\H{\mathcal{H}}
\def\K{\mathcal{K}}
\def\M{\mathcal{M}}
\def\P{\mathcal{P}}
\def\Q{\mathcal{Q}}
\def\R{\mathcal{R}}
\def\V{\mathcal{V}}
\def\g{\operatorname{G}}
\def\h{\operatorname{H}}
\def\tC{\widetilde{\C}}
\def\th{\widetilde{H}}
\newtheorem{theorem}{Theorem}
\newtheorem*{theorem*}{Theorem}
\newtheorem{lemma}[theorem]{Lemma}
\newtheorem{definition}[theorem]{Definition}
\def\path{\operatorname{P}}
\newcommand{\sansserif}[1]{%
  \ifmmode
  \mathsf{#1}%
  \else
   \textsf{#1}%
  \fi
}
\def\ev{\sansserif{ev}}
\renewcommand{\int}[2]{\Delta_{#1} \G'_{#2}}
\begin{document}
\title{Towards low overhead magic state distillation}
\author{Anirudh Krishna}
\affiliation{
  Universit\'e de Sherbrooke,
  2500 Boulevard de l'Universit\'e,
  Sherbrooke, QC J1K 2R1, Canada
  }
\author{Jean-Pierre Tillich}
\affiliation{
  Inria, Team SECRET,
  2 Rue Simone IFF, CS 42112,
  75589 Paris Cedex 12, France,
  }
\begin{abstract}
Magic state distillation is a resource intensive sub-routine for quantum computation.
The ratio of noisy input states to output states with error rate at most $\epsilon$ scales as $O(\log^{\gamma}(1/\epsilon))$ \cite{bravyi2012magic}.
In a breakthrough paper, Hastings and Haah \cite{hastings2017distillation} showed that it is possible to construct distillation routines with sub-logarithmic overhead, achieving $\gamma \approx 0.6779$ and falsifying a conjecture that $\gamma$ is lower bounded by $1$.
They then ask whether $\gamma$ can be made arbitrarily close to $0$.
We answer this question in the affirmative for magic state distillation routines using qudits of prime dimension ($d$ dimensional quantum systems for prime $d$).
\end{abstract}
\date{\small \today}

\maketitle

\emph{Introduction -- .}
One of the biggest obstacles we face as we endeavour to construct a fault-tolerant quantum computer is the tremendous resource overhead required to implement a universal set of gates.
Analyses of the resource usage in quantum circuits point to \emph{magic state distillation} as the biggest bottleneck \cite{suchara2013comparing,jones2013multilevel,o2017quantum}.
This sub-routine consumes several impure copies of a resource state to produce fewer, but purer, copies of this state.
The figure of merit used to measure the efficiency of this protocol is called the \emph{overhead}, defined as the ratio of the number of input to output states.
For a target error rate of $\epsilon$, the overhead scales as $O\left(\log^{\gamma}\left(\frac{1}{\epsilon}\right)\right)$ where $\gamma$ depends on the details of how the distillation sub-routine is constructed.
It was conjectured that $\gamma \geq 1$, which if true would imply a strict bound on the efficiency of magic state distillation \cite{bravyi2012magic}.

In a breakthrough article \cite{hastings2017distillation}, Hastings and Haah have recently constructed codes for which $\gamma \approx 0.6779$ is achievable and thereby falsified the above conjecture.
Their proof uses a quantum error correcting code with a very large block size (the number of qubits that will be addressed at one time step); they require a block size of roughly $2\times 10^{17}$.
This raises the question of whether $\gamma$ can be reduced arbitrarily close to zero and whether this is possible with a smaller block size.
In this paper, we answer these questions in the affirmative.

Our result employs quantum error correcting codes defined over qudits or $d$ dimensional quantum systems.
To the best of our knowledge, earlier qudit magic state distallation protocols such as \cite{anwar2012qutrit,campbell2012magic,howard2012qudit,campbell2014enhanced} are only capable of distilling a single qudit.
We construct quantum error correcting codes which can encode a growing number of logical qudits which also possess special symmetry properties desirable for fault tolerance.
In particular, we build on the classification of diagonal gates in the Clifford hierarchy by Cui et al. \cite{cui2017diagonal} to generalize the framework of tri-orthogonality \cite{bravyi2012magic} to qudits of dimension $p$, where $p$ is some prime.
We then use Reed-Solomon codes over a prime field to show that we can achieve $\gamma = O(1/\log(p))$.

\emph{Gates from the third level of the Clifford hierarchy -- .}
For a prime $p$, the state space of a qudit of dimension $p$ is associated with the complex Euclidean space $\complex^{p}$.
The generalized Pauli group over $\complex^{p}$ is defined using the shift and boost operators $X$ and $Z$ where for $j \in \field_{p}$,
\begin{align}
  X\ket{j} = \ket{j + 1}, \quad Z\ket{j} = \omega^{j} \ket{j}~,
\end{align}
where addition is modulo $p$ and $\omega = \exp\left( 2\pi i /p \right)$ is the $p$-th root of unity.
As shorthand, we shall let $X^{g} = \bigotimes_{i} X^{g_i}$ and $Z^{f} = \bigotimes_{j} Z^{f_j}$ for $g,f \in \field_{p}^{n}$.
For $p>2$, the Pauli group is the group generated by $X$ and $Z$ along with a phase
\begin{align}
  \P = \langle \omega\1, X, Z \rangle~.
\end{align}
The $n$-qudit Pauli group $\P_n = \P^{\otimes n}$ is then the $n$-fold tensor product of $\P$.
aoeuidhtnspyfgcrlqjkxbmw
Gottesman and Chuang introduced the Clifford hierarchy \cite{gottesman1999demonstrating}, which is defined recursively as
\begin{align}
  \K^{(t)} = \{U | UPU\conj \in \K^{(t-1)}, \forall P \in \P\}~.
\end{align}
The second level of this hierarchy is the automorphism group of the Pauli group, denoted $\K^{(2)}$, and is called the Clifford group.
Although Clifford gates form a closed finite group, we can use them together with any gate in the third level of the hierarchy to compose a universal set of gates for quantum computation \cite{campbell2012magic}.
We shall restrict our attention to diagonal gates in the Clifford hierarchy, and use $\K^{(t)}_d$ to denote the diagonal gates in the $t$-th level of the hierarchy.

For $m \in \natural$ and $1 \leq a \leq p-1$, we let $U_{m,a}$ be the single-qudit gate
\begin{align}
  U_{m,a} = \sum_{j} \exp\left(\frac{2\pi i}{p^m} j^a \right)\kb{j}~.
\end{align}
These gates are characterized by the parameter $m$ known as the \emph{precision} of the unitary gate and the degree $a$ of the monomials $j^{a}$ in the exponent of the phase.

The following result was shown by Cui et al. \cite{cui2017diagonal} (see theorem 2).
\begin{lemma}[Cui et al.]
  \label{lem:umaContainment}
  For $m\in \natural$, $1 \leq a \leq p-1$,
    \begin{align}
      U_{m,a} \in \K^{((p-1)(m-1) + a)}_d \setminus \K^{((p-1)(m-1) + a-1)}_d~.
    \end{align}
\end{lemma}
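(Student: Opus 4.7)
The plan is to analyze how $U_{m,a}$ conjugates the Pauli generators and then induct on the hierarchy level. Since $U_{m,a}$ is diagonal, $U_{m,a}\,Z\,U_{m,a}^\dagger = Z$. For the shift generator, a direct finite-difference calculation gives $U_{m,a}\,X\,U_{m,a}^\dagger = X \cdot D_{m,a}$, where $D_{m,a} = \sum_j \exp\!\left(\frac{2\pi i}{p^m}((j+1)^a - j^a)\right)\kb{j}$. Expanding $(j+1)^a - j^a = \sum_{k=0}^{a-1}\binom{a}{k} j^k$ by the binomial theorem shows that $D_{m,a}$ is itself a diagonal gate with polynomial phase of degree exactly $a-1$ at the same precision $m$, with leading coefficient $a$. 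The crucial arithmetic input is that $1 \leq a \leq p-1$, which forces all binomial coefficients $\binom{a}{k}$ for $0 \leq k \leq a-1$ to be nonzero modulo $p$.

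For the containment $U_{m,a} \in \K^{((p-1)(m-1)+a)}_d$, I would strengthen the inductive hypothesis to cover the whole class of polynomial-phase diagonal gates $V_{m,\phi} := \sum_j \exp(2\pi i \phi(j)/p^m)\kb{j}$, claiming that whenever $\deg\phi = a \leq p-1$ and the leading coefficient of $\phi$ is coprime to $p$, $V_{m,\phi} \in \K^{((p-1)(m-1)+a)}_d$. This broader class is closed under the finite-difference operation $\phi \mapsto \Delta\phi$, so the inductive step follows directly from the calculation above: $V_{m,\phi}\, X\, V_{m,\phi}^\dagger = X \cdot V_{m,\Delta\phi}$ with $V_{m,\Delta\phi}$ at the next level down by induction. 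The base cases are $\phi(j) = j$ at $m=1$ (giving $Z \in \K^{(1)}$), together with the bare global phases $e^{2\pi i/p^m}\1$, which must be placed at level $(p-1)(m-1)$.

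For the non-containment $U_{m,a} \notin \K^{((p-1)(m-1)+a-1)}_d$, I would iterate the conjugation and track the residual global phase. Applying the finite difference $a$ times reduces $\phi(j) = j^a$ to the constant $a!$, still coprime to $p$, leaving a residual global phase $e^{2\pi i a!/p^m}\1$ of exact multiplicative order $p^m$. Were $U_{m,a}$ already in $\K^{((p-1)(m-1)+a-1)}_d$, then $a-1$ further iterations (each peeling off one hierarchy level) would force the residual phase down to a $p^{m-1}$-th root of unity, contradicting its actual order $p^m$. The level count $(p-1)(m-1) + a$ thus reflects the ``cost'' of $p-1$ hierarchy levels per precision factor $p$, plus $a$ levels for the polynomial degree.

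The main obstacle is that the Clifford hierarchy is not multiplicatively closed, so the naive product decomposition $D_{m,a} = \prod_{k<a} U_{m,k}^{\binom{a}{k}}$ does not by itself pin down the level of $D_{m,a}$ from the levels of its factors. The fix is to induct over $V_{m,\phi}$ rather than over the $U_{m,a}$'s directly, using the finite-difference closure of the polynomial-phase class so that multiplication is never invoked. A secondary subtlety is consistently assigning hierarchy levels to bare global phases $e^{2\pi i/p^m}\1$; this requires either enlarging the Pauli group to include all $p^m$-th roots of unity in its center or working modulo global phases, with either convention yielding the same final formula.
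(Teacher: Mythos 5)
First, a point of comparison: the paper does not prove this lemma at all --- it is imported verbatim from Theorem~2 of Cui et al.\ \cite{cui2017diagonal} --- so your attempt must stand on its own. Your upper-bound skeleton (induct over a class of polynomial-phase diagonal gates closed under finite differences) is the right shape, but the central computation is wrong precisely where the content of the lemma lives, namely $m\ge 2$. The shift $X$ acts modulo $p$ while the phase is computed modulo $p^m$: at $j=p-1$ the conjugation produces the phase $\exp\left(\tfrac{2\pi i}{p^m}(0^a-(p-1)^a)\right)$, not $\exp\left(\tfrac{2\pi i}{p^m}(p^a-(p-1)^a)\right)$, and the discrepancy $p^a$ is nonzero mod $p^m$ whenever $a<m$. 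So $D_{m,a}$ is \emph{not} the precision-$m$ gate with phase $\sum_{k<a}\binom{a}{k}j^k$; it carries an extra wrap-around factor supported at $j=p-1$, which (using $\delta_{j,p-1}=1-(j+1)^{p-1}$ over $\field_p$) is a degree-$(p-1)$ gate of precision $m-a$. This correction is not a nuisance term --- it is the entire mechanism producing the $(p-1)(m-1)$ part of the level formula. Concretely, for $a=1$ your calculation yields $U_{m,1}XU_{m,1}^\dagger=e^{2\pi i/p^m}X$, which would put $U_{m,1}$ in $\K^{(2)}_d$ for every $m$, contradicting the claimed level $(p-1)(m-1)+1$.

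The second gap is your handling of global phases, on which your entire lower bound rests. Assigning $e^{2\pi i/p^m}\1$ to level $(p-1)(m-1)$ is not a convention you are free to adopt: a global phase conjugates every Pauli to itself and therefore sits in $\K^{(2)}_d$ for all $m$ by the definition of the hierarchy. Hence the residual phase $e^{2\pi i\,a!/p^m}\1$ of order $p^m$ that your descent terminates in is no obstruction at all, and the non-containment argument collapses. In the correct argument the obstruction is the non-constant wrap-around gate above, which keeps the descent going for exactly $(p-1)(m-1)+a$ steps before landing in the Pauli group; one also needs the non-free facts that for diagonal $U$ the relation $UXU^\dagger\in\K^{(t-1)}$ can be converted into $UXU^\dagger X^{-1}\in\K^{(t-1)}_d$, and that the diagonal gates at each level form a group. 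Your instinct to avoid multiplying hierarchy elements by enlarging the induction class is sound, but the class must include the mixed-precision products generated by wrap-around (essentially the generating set $\{U_{m',a'}\}$ that Cui et al.\ work with), not only single-precision polynomial phases.
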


If the dimension of the qudit is $p=3$ i.e. a qutrit, the third-level gate is $U_{2,1}$ whereas for qudits of dimension $p>3$, the third-level gate is $U_{1,3}$.
For the main body of the article, we shall restrict our attention to $p \geq 5$.
We have dealt with the special case of $p=3$ as well in appendix A.

Before proceeding to the next section, we recall that an $n$-qudit unitary gate $U$ is \emph{transversal} if it can be expressed as $\otimes_{i=1}^n V_i$, where $V_i$ is some single-qudit unitary operator \cite{gottesman1997stabilizer}.
Since these gates act on each qudit independently, such gates prevent the spread of errors on one qudit from spreading to another and are therefore fault-tolerant by construction.
We seek quantum error correcting codes such that logical gates from the third-level of the Clifford hierarchy can be implemented transversally and this motivates the framework of tri-orthogonality.

\emph{Generalized tri-orthogonality--.}
In this section, we extend the definition of a triply-even space and a tri-orthogonal code as given in \cite{bravyi2012magic}, \cite{haah2017codes}.

\begin{definition}[\textbf{Tri-orthogonal matrix}]
	\label{def:triorthogonalMatrix}
	Let $\h \in \field_{p}^{m \times n}$ be a matrix whose rows are labeled $\{h^{a}\}_{a=1}^{m}$.
    We say that $\h$ is \emph{tri-orthogonal} if
	\begin{enumerate}
    \item \label{item:biorthog} for $1 \leq a < b \leq m$,
    \[
      \sum_{i} h^{a}_i h^{b}_i = 0 \pmod{p}~.
    \]
		\item \label{item:triorthog} for $1 \leq a < b < c \leq m$,
		\[
			\sum_{i} h^{a}_i h^{b}_i h^{c}_i = 0 \pmod{p}~.
		\]
	\end{enumerate}
\end{definition}

We shall partition $\h$ into two matrices $\h_0$ and $\h_1$, where $\h_0$ contains the $(m-k)$ rows of $\h$ such that for $h \in \h_0$, its weight $\sum_i (h_i)^2 = 0 \pmod{p}$ and $\h_1$ contains the $k$ rows of $\h$ such that for $h \in \h_1$, its weight $\sum_j (h_j)^2 \neq 0 \pmod{p}$ for some natural number $k \leq M$.
Let $\g$ be the matrix whose rows are orthogonal to $\h$.
Denote by $\H_0,\H_1,\H$ and $\G$ the span of the rows of the matrices $\h_0$, $\h_1$, $\h$ and $\g$ respectively.
To obtain a tri-orthogonal quantum code $\css(X,\h_0; Z,\g)$ from the tri-orthogonal matrix $\h$, we associate 
\begin{enumerate}
  \item the rows of $\h_0$ with the $X$ stabilizer generators;
  \item the rows of $\g$ with the $Z$ stabilizer generators;
  \item the rows of $\h_1$ to both the $X$ and $Z$ logical operators.
\end{enumerate}

Note that when we work with qubits, assuming that the weight of a vector $\sum_i (h_i)^2 = 1 \pmod{2}$ implies that $\sum_i (h_i)^t = 1 \pmod{2}$ for all $t \in \natural$.
However this is not the case when working with other primes.
For instance, consider the vector $u := (0,1,2) \in \field_3$.
It obeys $\sum_i (u_i)^3 = 0 \pmod{3}$ but $\sum_i (u_i)^2 = 2 \pmod{3}$.
For this reason we define the quantity $\epsilon_a := \sum_i (h^{a}_i)^3$ for $1 \leq a \leq k$.

\begin{restatable*}{lemma}{validCode}
\label{lem:validCode}
  The code $\css(X,\h_0; Z,\g)$ is a valid quantum code with dimension $k$ and distance $d$, where
  \begin{align*}
    d = \min_{v \in \H_1 \setminus \G}(v)~.
  \end{align*}
\end{restatable*}
The proof of this statement has been relegated to appendix B.

Since $\css(X,\h_0;Z,\g)$ is a CSS code, for any $u \in \field_p^k$, the encoded state $\encket{u}$ can be expressed as
\begin{align*}
	\encket{u} = \frac{1}{\sqrt{|\H_0|}} \sum_{h \in \H_0} \ket{\sum_{a=1}^{k} u_a h^{a} + h}~,
\end{align*}
where the arithmetic within the ket is performed mod $p$.

\begin{theorem}
	\label{thm:genBraHaah}
	Let $\h \in \field_p^{m \times n}$ be a tri-orthogonal matrix whose rows are labelled $\{h^a\}_{a=1}^{m}$.
	$\C(\h) = \css(X,\h_0; Z, \g)$ be the tri-orthogonal code obtained from $\h$.
	For $1 \leq a \leq k$, $\epsilon_a := \sum_i (h_i^a)^3$.
  The transversal physical gate $U_{1,3}^{\otimes n}$ performs the following transversal logical gate:
	\begin{align*}
		 U_{1,3}^{\otimes n} \encket{u} = \bigotimes_{a=1}^{k} \left(\overline{U_{1,3}}\right)^{\epsilon_a} \encket{u}~.
	\end{align*}
\end{theorem}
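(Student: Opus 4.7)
The plan is to apply $U_{1,3}^{\otimes n}$ directly to the CSS form of $\encket{u}$ and track the phase picked up by each computational-basis ket. Set $\omega := e^{2\pi i/p}$ and $w(x) := \sum_i x_i^3 \pmod{p}$, so that $U_{1,3}^{\otimes n}\ket{x} = \omega^{w(x)}\ket{x}$, and write $v := \sum_{a=1}^{k} u_a h^a$. Then
\[
  U_{1,3}^{\otimes n} \encket{u} \;=\; \frac{1}{\sqrt{|\H_0|}} \sum_{h \in \H_0} \omega^{w(v+h)} \ket{v+h}.
\]
The target logical gate $\bigotimes_{a} (\overline{U_{1,3}})^{\epsilon_a}$ multiplies $\encket{u}$ by the uniform phase $\omega^{\sum_a \epsilon_a u_a^3}$, so the goal reduces to showing $w(v+h) \equiv \sum_{a=1}^{k} \epsilon_a u_a^3 \pmod{p}$ for every $h \in \H_0$; the scalar phase then pulls out of the coherent sum, which reassembles $\encket{u}$.

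Next, expand via the multinomial theorem,
\[
  w(v+h) \;=\; w(v) + 3\,T(v,v,h) + 3\,T(v,h,h) + w(h),
\]
where $T(x,y,z) := \sum_i x_i y_i z_i$ is the symmetric trilinear form. Substituting $h \mapsto t\,h$ for $t \in \field_p$ makes the $h$-dependent part a polynomial in $t$ of degree at most $3$; since $p \geq 5 > 3$, the identity on $\H_0$ splits into four independent sub-claims: $T(v,v,h) \equiv T(v,h,h) \equiv w(h) \equiv 0 \pmod{p}$ for all $h \in \H_0$, together with $w(v) \equiv \sum_a \epsilon_a u_a^3 \pmod{p}$. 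Each of these is attacked by expanding $v = \sum_a u_a h^a$ and $h = \sum_b c_b h^b$ (sums over the rows of $\h_1$ and $\h_0$ respectively) and distributing through $T$, so that every quantity becomes a signed sum of $T(h^{a_1}, h^{a_2}, h^{a_3})$ over ordered triples of row labels. Tri-orthogonality immediately kills every triple with three pairwise distinct labels.

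The main obstacle is the remaining ``collision'' contributions of the form $T(h^a, h^a, h^b)$ and $T(h^a, h^b, h^b)$, in which two of the three row labels coincide. The two stated axioms only constrain \emph{distinct}-index pairs and triples, so these collisions are not killed automatically; they must be controlled by combining bi-orthogonality with the defining property $\sum_i (h^b_i)^2 \equiv 0 \pmod{p}$ of the rows of $\h_0$. In the qubit case ($p=2$) this step is trivial because $x_i^2 = x_i$ collapses every collision to a plain inner product, but for $p \geq 5$ it is genuinely the zero-quadratic-weight structure of $\h_0$ that must do the work. Once the collisions are dispatched, the only surviving contribution is the diagonal $u_a^3 \epsilon_a$ appearing in $w(v)$, yielding $w(v+h) \equiv \sum_a \epsilon_a u_a^3 \pmod{p}$ and the theorem.
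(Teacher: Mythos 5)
Your overall strategy is the same as the paper's: expand the cube of a linear combination of rows via the multinomial theorem and use tri-orthogonality to kill the cross terms. But the step you yourself flag as ``the main obstacle'' is never actually carried out, and as stated it cannot be. You assert that the collision terms $T(h^a,h^a,h^b)=\sum_i (h^a_i)^2 h^b_i$ are ``controlled by combining bi-orthogonality with the defining property $\sum_i (h^b_i)^2 \equiv 0$ of the rows of $\h_0$.'' This is false: $\sum_i (h^a_i)^2 h^b_i$ is a genuinely new trilinear invariant that is not determined by the pairwise inner products and the quadratic weights. For example, over $\field_5$ take $h^b=(1,2,0,\dots)$ and $h^a=(2,4,1,0,\dots)$; then $\sum_i h^a_i h^b_i=10\equiv 0$, $\sum_i (h^b_i)^2=5\equiv 0$, $\sum_i(h^a_i)^2=21\equiv 1$, yet $\sum_i (h^a_i)^2 h^b_i=36\equiv 1\neq 0$. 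So a matrix can satisfy every condition of Definition \ref{def:triorthogonalMatrix} together with the weight conditions defining the partition into $\h_0,\h_1$ while the cross term $3u_a^2 c_b\sum_i (h^a_i)^2 h^b_i$ survives. The same issue afflicts your sub-claim $w(h)\equiv 0$ for $h\in\H_0$: this needs $\sum_i (h^b_i)^3\equiv 0$ for the rows of $\h_0$, which again does not follow from the stated axioms.

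To be fair, the paper's own proof glosses over exactly the same point (it writes the $\sum_{a<b}\bigl[(u_a)^2u_b(h^a_i)^2h^b_i+\cdots\bigr]$ terms explicitly and then claims they vanish ``by the definition of tri-orthogonality,'' which they do not). The correct repair is to strengthen the notion of tri-orthogonality for $p>2$ so that $\sum_i h^{a_1}_i h^{a_2}_i h^{a_3}_i=0$ is required for \emph{all} triples of row indices that are not all equal (and $\sum_i (h^a_i)^3=0$ for rows of $\h_0$), not merely for pairwise-distinct triples. These stronger conditions do hold for the matrices actually used in the paper, because they come from a triply-even code: triply-evenness bounds $|u*v*w|$ for arbitrary, possibly repeated, $u,v,w\in\V$, and Lemma \ref{lem:triplyEvenToTriorthogonal} then delivers the collision identities for free. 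Your proof needs either this strengthened hypothesis or an explicit derivation of the collision identities from the construction; without one of the two, the argument does not close.
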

\begin{proof}
	Let $f \in \rowspan{h^a}_{a=1}^{m} \subseteq \field_{p}^{n}$ be a vector in $\H$.
	The action of transversal $U_{1,3}$ on $\ket{f}$ can be expressed as
	\begin{align}
		U_{1,3}^{\otimes n}\ket{f} &= \prod_{i=1}^{n}\exp\left(\frac{2\pi i}{p}f_i^{3} \right)\ket{f}\\
								   &= \exp\left( \frac{2\pi i}{p} \sum_{i=1}^{n}f_i^{3} \right)\ket{f}~.
	\end{align}
	By assumption, we may express $f = \sum_{a} u_{a} h^{a}$ for some constants $\{u_a\}_a \in \field_p$ and therefore for $i \in \{1,..,N\}$, we have
	\begin{align}
		f_i^3 &= \left(\sum_{a} u_a h^{a}_{i}\right)^3\\
			  &= \sum_{a} u_a^3 (h^{a}_{i})^3
			       + \\
             &3\sum_{a<b} \left[(u_a)^2 u_b (h^{a}_{i})^2 h^{b}_i + u_a (u_b)^2 h^{a}_{i} (h^{b}_i)^2\right]\nonumber\\
			       + &\sum_{a<b<c} u_a u_b u_c h^{a}_{i}h^{b}_{i} h^{c}_{i}\nonumber
	\end{align}
	We may use the definition of tri-orthogonality (definition \ref{def:triorthogonalMatrix}) and $\epsilon$ to simplify this expression and obtain
	\begin{align}
		\sum_{i=1}^{n} f_i^3 &= \sum_{a=1}^{k} u_{a}^{3} \epsilon_a~.
	\end{align}
	Hence it follows that
	\begin{align}
		U_{1,3}^{\otimes n}\ket{f} &= \exp\left( \frac{2\pi i}{p} \sum_{a=1}^{k} u_{a}^{3}\epsilon_a \right)\ket{f}~,
	\end{align}
	which is the desired result.
\end{proof}

Triorthogonality can also be expressed simply using the $*$-product between two vectors $u,v$ which is denoted $u*v$ and represents the vector $(u_iv_i)_i$.
This can naturally be extended to the $*$-product between $3$ or more vectors.

\begin{definition}[\textbf{Triply even space}]
  A subspace $\V \subseteq \field^{N}_p$ is said to be \emph{triply-even} if for any triple $u,v,w \in \V$,
  \begin{align*}
    |u * v * w| = 0 \pmod{p}~.
  \end{align*}
\end{definition}

For our purposes, it will be convenient to express the triply-even property slightly differently as
\begin{lemma}
	\label{lem:triplyEvenEquivalance}
	$\V$ is triply-even if and only if $\V * \V \subseteq \V^{\perp}$.
\end{lemma}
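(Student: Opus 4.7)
The plan is to observe that both conditions unfold to the same elementary identity, namely
\[
  |u*v*w| \;=\; \sum_{i} u_i v_i w_i \;=\; \ip{u*v}{w},
\]
where the final inner product is the standard (unweighted) bilinear form on $\field_p^n$ used to define $\V^\perp$. Once this identity is in hand, the claimed equivalence is essentially a matter of quantifier bookkeeping.

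Concretely, I would first spell out the two sides. Triple-evenness asserts that $\sum_i u_i v_i w_i \equiv 0 \pmod p$ for every triple $u,v,w \in \V$. On the other hand, the containment $\V * \V \subseteq \V^{\perp}$ asserts that for every $u,v \in \V$ and every $w \in \V$, the vector $u * v = (u_i v_i)_i$ satisfies $\ip{u*v}{w} = 0 \pmod p$. Since the left-hand side of the latter is $\sum_i (u_i v_i) w_i$, the two universally quantified statements are literally the same sum being declared zero for all choices of $u,v,w \in \V$.

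For the forward direction I would therefore take arbitrary $u,v \in \V$ and an arbitrary $w \in \V$, write $\ip{u*v}{w} = |u*v*w|$, and invoke triple-evenness to conclude $u*v \in \V^\perp$. For the converse, given arbitrary $u,v,w \in \V$, the hypothesis gives $u*v \in \V^\perp$, hence $\ip{u*v}{w}=0$; rereading this as $|u*v*w| = 0 \pmod p$ yields triple-evenness.

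The main ``obstacle'' here is really just notational: one has to be comfortable that the symbol $|\cdot|$ used in the definition of triply-even is the same sum-of-entries functional that appears when expanding the standard inner product against a coordinate-wise product. Once that is made explicit, there is no genuine content beyond relabelling, and no induction, dimension count, or case analysis is required.
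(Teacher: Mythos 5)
Your proof is correct and takes essentially the same route as the paper, whose own proof is just a one-line assertion of the same unfolding; you simply make explicit the identity $|u*v*w| = \ip{u*v}{w}$ and the quantifier bookkeeping that the paper leaves implicit.
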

\begin{proof}
	If $\V$ is triply-even, then it means that for $u,v \in \V$, $u*v \in \V^{\perp}$.
	The other direction follows trivially.
\end{proof}

Let $\tC = [n+k,n+k-m]$ be a linear code over $\field_p$ defined by a parity check matrix $\th$ over $\field_p$ of dimension $m \times (n+k)$.
If the dual code $\tC^{\perp}$ is triply-even and contains the all $1$s vector, we may construct a quantum code by puncturing $\tC^{\perp}$.
For simplicity, suppose we puncture the first $k$ positions of the space; if not, we can always permute the qudits.
We may then express the parity check matrix $\th$ of the code $\tC$ in (almost) systematic form with respect to the puncture:
\begin{align}
  \th =
  \begin{pmatrix}
    -\1_k & \h_{1}\\
    0  & \h_{0}
  \end{pmatrix} ~,
\end{align}
where $\1_k$ represents the $k \times k$ identity matrix, and $\h_{1}$, $\h_{0}$ are matrices of dimension $k \times n$ and $(m-k) \times n$ respectively.
Note that this is not exactly systematic form because there is a minus sign preceding the identity $\1_k$.

\begin{restatable}{lemma}{triplyEvenToTriorthogonal}
\label{lem:triplyEvenToTriorthogonal}
  Let $\tC$ be an $[n+k,n+k-m]$ code such that $\tC^{\perp}$ is triply-even and contains the all $1$s vector.
  The parity check matrix $\h$ of the code $\C$
\begin{align}
  \h = 
  \begin{pmatrix}
    \h_{1}\\
    \h_{0}
  \end{pmatrix}
\end{align}
is tri-orthogonal.
Furthermore, for all $1\leq a \leq k$, $h^a \in \h_{1}$, we must have $\epsilon_a = \sum_i (h^a_i)^3 = 1 \pmod{p}$ and $h^a \in \h_{1}$ that $\sum_i (h^a_i)^2 = -1 \pmod{p}$.
\end{restatable}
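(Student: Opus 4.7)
The plan is to lift the structural conditions on $\tC^\perp$ (triply-even and containing the all-ones vector) down to the rows of $\h$, exploiting the almost-systematic form of $\th$. Since the $m$ rows of $\th$ span $\tC^\perp$, each hypothesis on $\tC^\perp$ translates into a constraint on tuples of those rows; these constraints then reduce to the claimed ones on $\h$ after projecting away the first $k$ coordinates.

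First I would record a key preliminary: since $\1 \in \tC^\perp$, the triply-even condition applied with one argument equal to $\1$ yields $\sum_i v_i w_i \equiv 0 \pmod{p}$ for all $v, w \in \tC^\perp$, so $\tC^\perp$ is self-orthogonal. To deduce biorthogonality of $\h$, take two distinct rows $h^a, h^b$ of $\h$ and the corresponding rows $\tilde{h}^a, \tilde{h}^b$ of $\th$. Their restrictions to the first $k$ coordinates have disjoint supports (they are either $-e_a$ and $-e_b$ with $a \neq b$, or include a zero vector), so this portion contributes nothing to $\tilde{h}^a \cdot \tilde{h}^b$, and self-orthogonality then reduces the identity to $h^a \cdot h^b \equiv 0 \pmod{p}$. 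Tri-orthogonality of $\h$ follows by exactly the same bookkeeping: for any three distinct rows, the first-$k$ restrictions have pairwise disjoint supports, so no position $i \leq k$ contributes to $\sum_i \tilde{h}^a_i \tilde{h}^b_i \tilde{h}^c_i$, and the triply-even hypothesis forces $\sum_i h^a_i h^b_i h^c_i \equiv 0 \pmod{p}$.

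For the weight conditions on rows $h^a \in \h_1$, I would feed the coincident triple $u = v = w = \tilde{h}^a$ into both available conditions. Self-orthogonality of $\tilde{h}^a = (-e_a, h^a)$ with itself gives $1 + \sum_i (h^a_i)^2 \equiv 0 \pmod{p}$, whence $\sum_i (h^a_i)^2 \equiv -1$. The triply-even condition at $u = v = w = \tilde{h}^a$ gives $(-1)^3 + \sum_i (h^a_i)^3 \equiv 0$, so $\epsilon_a \equiv 1 \pmod{p}$. The argument is essentially bookkeeping, and the one thing to check carefully is that the triply-even definition permits coinciding arguments (this is the natural reading because $\V$ is a subspace); consequently the minus sign in $-\1_k$ actively matters for the cubic sum while being invisible to the quadratic one. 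I do not anticipate any genuine obstacle beyond this sign-tracking.
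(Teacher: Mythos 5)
Your proposal is correct and follows essentially the same route as the paper's own proof: restrict the triply-even and self-orthogonality (from the all-ones vector) conditions on rows of $\th$ to the last $n$ coordinates, using the disjointness of supports on the first $k$ coordinates for distinct rows and the explicit $(-1)^2$ versus $(-1)^3$ contribution for coincident rows in $\h_1$. Your explicit sign-tracking and the remark that the triply-even definition allows repeated arguments are exactly the points the paper relies on implicitly.
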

The proof of this claim can be found in the appendix \ref{subsec:triplyEvenToTriorthogonal}.

The protocol for performing distillation has been presented in the paper by Bravyi and Haah \cite{bravyi2012magic}.
Although this protocol and analysis was originally meant for qubits, it readily extends to qudit systems as well.
For a target error rate of $\epsilon$, the overhead for a $\dsl n,k,d\dsr$ quantum error correcting code therefore scales as $\log^{\gamma}(1/\epsilon)$, where $\gamma = \log(n/k)/\log(d)$.

\emph{Reed-Solomon codes --.} As noted, the condition of tri-orthogonality can be cast in terms of the star-product.
Incidentally, this is a well-studied object in classical coding theory \cite{randriambololona2015products}.
Based on this, we expect that satisfying the tri-orthogonality condition with a good minimum distance becomes easier as the size of the field over which the codes are defined increases.

Reed-Solomon codes are arguably the simplest class of algebraic codes that illustrate this idea.
Let $\R$ be the ring of univariate polynomials over the field $\field_{p}$, i.e. $\R = \field_p[x]/(x^p-x)$.
We define the evaluation map $\ev:\R \to \field_p$ for any polynomial as $\ev(\alpha) = (\alpha(u))_{u \in \field_p}$.
Note that the map $\ev$ is a bijection and in particular, maps the $*$-product to the product of monomials, i.e. if $\alpha, \beta \in \M$, then
\begin{align}
	\ev(\alpha)*\ev(\beta) = \ev(\alpha\beta)~.
\end{align}
We are now equipped to define the Reed-Solomon code.
\begin{definition}
	\label{def:ReedSolomon}
	For $l \leq p$, the Reed-Solomon code $\rs_l$ is defined as
	\begin{align*}
		\left.\right.\rs_l := \operatorname{span}\{ \ev(\alpha) | \alpha \in \R, \deg{\alpha} < l\}~.
	\end{align*}
\end{definition}

We state some well known properties of Reed-Solomon codes and point the interested reader to \cite{macwilliams1977theory} for proofs.
\begin{lemma}
\label{lem:dualRS}
	For $m \leq p$, the Reed-Solomon code $\rs_l$ is a $[p,l,p-l+1]$ code over $\field_p$.
  Its dual $\rs_l^{\perp}$ is
	\[
		\rs_l^{\perp} = \rs_{p-l}~.
	\]
\end{lemma}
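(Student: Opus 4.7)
The plan is to handle the two assertions of the lemma in sequence, relying throughout on the fact that the evaluation map $\ev$ is being applied at every point of $\field_p$ (so the length is automatically $p$).

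First I would establish the basic parameters $[p, l, p-l+1]$. The length is clear. For the dimension, the key observation is that the restriction of $\ev$ to polynomials of degree less than $l$ is injective: a nonzero polynomial of degree $< l$ can vanish at at most $l - 1$ points, and since $l \leq p$ the evaluation vector cannot be zero. Thus $\dim(\rs_l) = l$. The same counting argument yields the minimum distance: if $\alpha \ne 0$ with $\deg \alpha < l$, then $\ev(\alpha)$ has at most $l-1$ zero coordinates, so its weight is at least $p - l + 1$. This matches the Singleton bound, and equality is attained by any polynomial of the form $\prod_{j=1}^{l-1}(x - u_j)$ for distinct $u_j \in \field_p$, so the distance is exactly $p - l + 1$.

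Next I would handle the duality statement $\rs_l^{\perp} = \rs_{p-l}$. By the first part, $\dim(\rs_{p-l}) = p - l = p - \dim(\rs_l)$, so it suffices to prove the inclusion $\rs_{p-l} \subseteq \rs_l^{\perp}$. Given $\alpha \in \R$ with $\deg \alpha < l$ and $\beta \in \R$ with $\deg \beta < p - l$, the product $\alpha\beta$ has degree strictly less than $p - 1$. Writing $\alpha\beta = \sum_{k=0}^{p-2} c_k x^k$, the inner product becomes
\begin{align*}
\langle \ev(\alpha), \ev(\beta) \rangle = \sum_{u \in \field_p} \alpha(u)\beta(u) = \sum_{k=0}^{p-2} c_k \sum_{u \in \field_p} u^k.
\end{align*}
The main technical ingredient is the standard character-sum identity $\sum_{u \in \field_p} u^k = 0$ for $0 \leq k \leq p - 2$ (with the convention $0^0 = 1$, the $k = 0$ term gives $p = 0 \pmod p$, while for $1 \leq k \leq p - 2$ one multiplies the sum by a generator of $\field_p^\times$ raised to the $k$-th power to see it is fixed by a nontrivial scalar). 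Each inner sum therefore vanishes mod $p$, giving $\ev(\beta) \in \rs_l^{\perp}$.

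The only step that requires any thought is the character-sum identity, which I would cite or prove in a single line as above. Everything else is a dimension count or a direct bound on the number of roots of a polynomial; no step is a serious obstacle once $l \leq p$ is assumed, which guarantees that evaluation on $\field_p$ faithfully detects degree-$< l$ polynomials.
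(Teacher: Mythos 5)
Your proof is correct and complete. Note that the paper itself offers no proof of this lemma --- it explicitly defers to MacWilliams and Sloane --- so there is nothing to compare against except the standard textbook argument, which is exactly what you give: injectivity of $\ev$ on degree-$<l$ polynomials for the dimension, the root-counting bound (with the Singleton-meeting example $\prod_{j=1}^{l-1}(x-u_j)$) for the distance, and a dimension count plus the power-sum identity $\sum_{u\in\field_p}u^k=0$ for $0\le k\le p-2$ for the duality. One small point worth making explicit: the product $\alpha\beta$ should be taken in $\field_p[x]$ rather than in the quotient $\R=\field_p[x]/(x^p-x)$, but since $\deg(\alpha\beta)<p-1$ no reduction occurs and your degree bound is exactly what justifies this. (The hypothesis ``$m\le p$'' in the statement is a typo for ``$l\le p$,'' which you correctly read as the intended assumption.)
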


These observations imply the following simple condition to state when a Reed-Solomon code $\rs_l$ is triply-even.
\begin{theorem}
	The Reed-Solomon code $\rs_l$ is triply even if $3l < p+1$.
\end{theorem}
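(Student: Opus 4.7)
The plan is to invoke Lemma \ref{lem:triplyEvenEquivalance}, which reduces triple-evenness of $\rs_l$ to the containment $\rs_l * \rs_l \subseteq \rs_l^{\perp}$. Lemma \ref{lem:dualRS} identifies $\rs_l^{\perp}$ with $\rs_{p-l}$, so it is enough to show $\rs_l * \rs_l \subseteq \rs_{p-l}$.

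To establish this, I would exploit the correspondence between the $*$-product on evaluation vectors and polynomial multiplication noted just above Definition \ref{def:ReedSolomon}: for $\alpha, \beta \in \R$, $\ev(\alpha) * \ev(\beta) = \ev(\alpha \beta)$. Take generators $\ev(\alpha), \ev(\beta) \in \rs_l$ with $\deg \alpha, \deg \beta < l$. Their ordinary polynomial product $\alpha \beta$ has degree at most $2l-2$. The integer hypothesis $3l < p+1$, i.e.\ $3l \leq p$, implies in particular that $2l - 2 < p$, so $\alpha\beta$ already coincides with its canonical representative in $\R = \field_p[x]/(x^p - x)$ and no reduction modulo $x^p - x$ occurs. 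The same hypothesis also yields $2l - 2 \leq p - l - 2 < p - l$, so by Definition \ref{def:ReedSolomon} we have $\ev(\alpha\beta) \in \rs_{p-l}$. Bilinearity of the $*$-product then extends the conclusion to arbitrary $u, v \in \rs_l$, giving $\rs_l * \rs_l \subseteq \rs_{p-l} = \rs_l^{\perp}$. An invocation of Lemma \ref{lem:triplyEvenEquivalance} completes the proof.

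There is no real obstacle here beyond bookkeeping of degrees; the argument collapses to a single inequality. This is precisely why Reed-Solomon codes over large prime fields are such natural candidates in this framework: the condition $3l \leq p$ leaves ample room for codes of growing dimension and distance as $p$ increases, which is exactly what one needs to eventually push $\gamma = \log(n/k)/\log(d)$ toward zero.
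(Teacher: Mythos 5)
Your proposal is correct and follows the same route as the paper: reduce triple-evenness to $\rs_l * \rs_l \subseteq \rs_l^{\perp} = \rs_{p-l}$ via Lemmas \ref{lem:triplyEvenEquivalance} and \ref{lem:dualRS}, then use $\ev(\alpha)*\ev(\beta)=\ev(\alpha\beta)$ and a degree count. Your extra checks (that no reduction modulo $x^p-x$ occurs and that bilinearity handles general codewords) are sound refinements of the paper's more terse argument, not a different approach.
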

\begin{proof}
	According to lemma \ref{lem:triplyEvenEquivalance}, the condition for a Reed-Solomon code to be triply even is
	\begin{align*}
		\rs_l * \rs_l \subseteq \rs_l^{\perp} = \rs_{p-l+1}~,
	\end{align*}
	where the last equality follows from lemma \ref{lem:dualRS}.
	Since the map $\ev$ is a bijection,
	\[
		\ev(\alpha)*\ev(\beta)= \ev(\alpha\beta)~.
	\]
	For $\alpha,\beta \in \rs_{l}$, $\deg(\alpha\beta) < 2l -1$.
	This implies that the code $\rs_l$ is triply even if $2l-1 < p-l$ or equivalently if $3l < p+1$.
\end{proof}

For $u \in \field_p^{p}$, and any set $\A \subseteq \integer_p$, let $u|_{\A}$ denote the restriction of the vector $u$ to the indices in $\A$.
In particular, if $\A = \{a_i\}_{i=1}^{k} \subseteq \integer_p$ is some set of punctured locations, then
we can define a shortened Reed-Solomon code $\srs_{p-l,\A}$ defined as
\begin{align}
  \srs_{p-l,\A} = \{u|_{\A^c} | u \in \rs_{p-l}, u|_{\A} = 0\}~,
\end{align}
where $\A^c$ denotes the complement of $\A$ within $\integer_p$.

Let $\srs_{p-l,\A}^{\perp} = \prs_{l,\A}$ denote the punctured Reed-Solomon code defined as
\begin{align}
  \prs_{l,\A} = \{u|_{\A^c} | u \in \rs_{l} \}~.
\end{align}

In constructing a tri-orthogonal quantum code by puncturing $\rs_{l}$, the $X$ stabilizers correspond to the set $\srs_{l,\A}$ and the $Z$ stabilizers correspond to the set $\srs_{p-l,\A}$.
Therefore the distance of the quantum code is the distance of $\srs_{l,\A}^{\perp} = \prs_{p-l,\A}$.

\begin{lemma}
  The punctured Reed-Solomon code $\prs_{p-l,\A}$ has distance $l-k$.
\end{lemma}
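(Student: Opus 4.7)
The plan is to invoke the MDS property of Reed--Solomon codes. By Lemma~\ref{lem:dualRS}, $\rs_{p-l}$ is a $[p, p-l, l+1]$ MDS code, and $\prs_{p-l,\A}$ is obtained by puncturing this code at the $k = |\A|$ positions in $\A$. The claim will then follow from standard facts about how MDS codes behave under puncturing.

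First I would verify that puncturing does not collapse the dimension. Since $\rs_{p-l}$ has minimum distance $l+1$, and we are working in the regime $k \leq l$ (so that $k < l+1$), no nonzero codeword of $\rs_{p-l}$ can be supported entirely on $\A$. Hence the restriction map $c \mapsto c|_{\A^c}$ is injective on $\rs_{p-l}$, and $\dim{\prs_{p-l,\A}} = p-l$.

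Next I would lower bound the minimum distance. For any nonzero $c \in \rs_{p-l}$, deleting the $k$ coordinates indexed by $\A$ reduces its Hamming weight by at most $k$, so $\wt{c|_{\A^c}} \geq \wt{c} - k \geq (l+1) - k$. Taking the minimum over nonzero $c \in \rs_{p-l}$ yields distance at least $l - k + 1$, from which the claimed bound $l-k$ is immediate. Comparing with the Singleton bound applied to the punctured code of length $p-k$ and dimension $p-l$ shows the distance is in fact exactly $l-k+1$, so the lemma's stated bound holds with room to spare.

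The only real subtlety is the side condition $k \leq l$, which is needed both for dimension preservation and for the distance bound to be non-vacuous; beyond this bookkeeping, the argument is a routine application of the MDS puncturing lemma, and I would not expect any deeper obstacle.
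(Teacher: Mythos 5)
Your argument is correct, and for the lower bound it coincides with the paper's: both observe that puncturing $k$ coordinates of $\rs_{p-l}$ (minimum distance $l+1$) can reduce a codeword's weight by at most $k$, giving distance at least $l-k+1$. Where you diverge is in pinning down the exact value. You invoke the Singleton bound on the $[p-k,\,p-l]$ punctured code to conclude the distance is exactly $l-k+1$, whereas the paper tries to exhibit a minimum-weight codeword explicitly via $g=\prod_{b\in\B}(x-b)$ with $\B\subset\A^c$, $|\B|=k$. That witness is miscalculated: $\ev(g)$ vanishes exactly on $\B$, so $\ev(g)|_{\A^c}$ has weight $|\A^c|-k=p-2k$, not $l-k$; a correct witness would be a polynomial of degree $p-l-1$ with all of its $p-l-1$ roots placed in $\A^c$, whose restriction has weight $(p-k)-(p-l-1)=l-k+1$. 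Your route therefore avoids this error and also makes explicit that the lemma as stated is off by one: the true minimum distance is $l-k+1$, and $l-k$ is only a (harmless) lower bound, which is all that is needed for the downstream quantum code parameters. Your remark that $k\leq l$ is required for the restriction map to be injective, so that the punctured code keeps dimension $p-l$, is a hypothesis the paper uses silently and is worth recording.
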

\begin{proof}
	$\prs_{p-l,\A}$ contains polynomials from $\rs_{p-l}$ with degree strictly less than $p-l$.
	If we puncture the code at locations in $\A$, the distance can drop by at most $k$.
	Let $\A^c = \integer_p \setminus \A$ and let $\B \subset \A^c$ be an arbitrary subset such that $|\B| = k$.
	This implies that the polynomial $g:= \prod_{b \in \B} (x-b)$ is non-zero on the locations of the punctures and therefore, that $\ev(g)|_{\A^c}$ has weight $l-k$.
\end{proof}

Together with lemma \ref{lem:triplyEvenToTriorthogonal}, these results imply the following lemma:
\begin{lemma}
	For prime $p$, choose natural numbers $l,k$ such that $3l \leq p+1$ and $k \leq l$.
	The tri-orthogonal quantum code $\Q$ whose $X$ stabilizers are $\srs_{l,\A}$ and $Z$ stabilizers are $\srs_{p-l,\A}$ is a $\dsl p-k,k,l-k\dsr$ quantum code.
\end{lemma}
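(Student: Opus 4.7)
\emph{Proof plan.} The plan is to assemble the preceding lemmas in three steps.

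First, I would apply Lemma \ref{lem:triplyEvenToTriorthogonal} with $\tC = \rs_{p-l}$, whose dual $\tC^\perp = \rs_l$ is triply-even by the theorem above (since $3l \leq p+1$) and contains the all-ones vector $\ev(1)$. This produces a tri-orthogonal matrix $\h$ of dimension $l \times (p-k)$ split into a block $\h_1$ of $k$ rows and a block $\h_0$ of $l-k$ rows, together with the associated CSS code on $n = p-k$ physical qudits encoding $k$ logical qudits.

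Second, I would identify the row spans with the Reed-Solomon codes in the statement. Since the rows of the parity check matrix of $\tC$ span $\tC^\perp = \rs_l$, restricting to the unpunctured coordinates $\A^c$ shows that the row span of $\h$ equals $\prs_{l,\A}$. The rows that vanish on $\A$, namely the rows of $\h_0$, then span the subcode $\srs_{l,\A}$, giving the $X$ stabilizers. By duality, the row span of $\g$ is $\H^\perp = \prs_{l,\A}^\perp = \srs_{p-l,\A}$, which gives the $Z$ stabilizers as required.

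Third, I would compute the distance. By Lemma \ref{lem:validCode}, $d$ equals the minimum weight over nontrivial logical representatives. The $Z$-side logicals lie in the coset space $\srs_{l,\A}^\perp / \srs_{p-l,\A} = \prs_{p-l,\A}/\srs_{p-l,\A}$, and the preceding lemma has already established that $\prs_{p-l,\A}$ has minimum distance $l-k$, with explicit witness $\ev(g)|_{\A^c}$, where $g = \prod_{b \in \B}(x-b)$ for any $\B \subset \A^c$ of size $k$. Because this witness does not vanish on $\A$, it cannot lie in $\srs_{p-l,\A}$, so it represents a nontrivial logical of weight exactly $l-k$. The most delicate step is precisely this final certification, namely that the low-weight classical codeword represents a nontrivial logical rather than a stabilizer, which follows from its nonvanishing on the puncture.
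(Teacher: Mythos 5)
Your plan is correct and follows exactly the route the paper intends: the paper states this lemma without an explicit proof, asserting only that it follows from the triply-even criterion for $\rs_l$, Lemma \ref{lem:triplyEvenToTriorthogonal}, Lemma \ref{lem:validCode}, and the distance lemma for $\prs_{p-l,\A}$, which is precisely the assembly you carry out. Your final step --- certifying that the low-weight witness is a nontrivial logical because it does not vanish on $\A$ and hence lies outside $\srs_{p-l,\A}$ --- is a point the paper leaves implicit, and is a worthwhile addition.
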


The overhead associated with $\Q$ is then just
\begin{align}
  \gamma &= \frac{\log((p-k)/k)}{\log(m-k)}~.
\end{align}
We find that $\gamma = 0.98 < 1$ for a $\dsl 35,6,6 \dsr$ code over $p=41$ and $\gamma = 0.657..$ for a $\dsl 83,14,15\dsr$ code for $p=97$.
Whereas Hastings and Haah required block sizes of roughly $2 \times 10^{17}$ to achieve sub-logarithmic overhead, we find that it is possible to exceed the $\gamma$ they could achieve using a block size of $83$.
We hasten to add this comes at the cost of using very large qudits whose dimension scales linearly with the block size.





\emph{Conclusion--.}
We have demonstrated that we can achieve arbitrarily close to constant overhead for magic state distillation using prime dimensional qudits.
Furthermore, we have done so with relatively small block sizes, although the dimension of the qudits is fairly large.
It still remains an open question to show that $\gamma$ can be made arbitrarily close to $0$ for qubit based protocols.
Moreover, it is important to know whether this can be done for small block sizes.

\emph{Acknowledgements--.}
We would like to thank David Poulin for many helpful discussions and Colin Trout for detailed feedback on an earlier draft of this work.
A.K. would like to thank Inria, Paris for their hospitality during his visit and the MITACS organization for the Globalink award which facilitated his visit to Inria, Paris.
A.K. also acknowledges support from the Fonds de Recherche du Qu\'ebec - Nature et Technologies (FRQNT) via the B2X scholarship for doctoral candidates.
JPT acknowledges the support of the European Union and the French Agence Nationale de la Recherche through the QCDA project.

\bibliographystyle{unsrtabbrev}
\bibliography{references}

\section*{Appendices}
\subsection{The special case of $p=3$}
\label{subsec:p3}
In this appendix, we show that tri-orthogonality applies to $p = 3$ as well.

We first show how addition modulo $9$ can be expressed in ternary.
First we note that any element of $a \in \integer_9$ can be expressed as $a_0 + 3a_1$ for $a_0,a_1 \in \field_3$.

For any two elements $a,b \in \integer_9$, where $a = a_0 + 3a_1$ and $b = b_0 + 3b_1$, we can express addition over $\integer_9$ in terms of addition mod $3$ as
\begin{align}
	\label{eq:base}
	a+b = (a_0 + b_0) - 3(a_0b_0 + a_0^2b_0 + a_0b_0^2) + 3(a_1 + b_1)~,
\end{align}
where the arithmetic within parthentheses is performed modulo $3$.
It is straightforward to check this equation explicitly.
This can easily be extended to the sum over several elements as follows.
\begin{lemma}
	\label{lem:additionMod9}
	For $m\in \natural$, let $\{a^i\}_{i=1}^{m}$ be elements of $\integer_9$, where for each $i$, we may express $a^i$ in ternary as $a^i_0 + 3a^i_1$.
	Then
	\begin{align}
	\label{eq:ind}
		\sum_i a^i = &\left(\sum_i a^i_0\right) - 3\left(\sum_{i<j} a^i_0a^j_0\right) \\
    -&3\left(\sum_{i<j} (a^i_0)^2a^j_0 + a^i_0(a^j_0)^2\right)\nonumber\\
    +&3\left(\sum_{i<j<k}a^i_0a^j_0a^k_0\right) + 3\left(\sum_{i} a^i_1\right)\nonumber~.
	\end{align}
\end{lemma}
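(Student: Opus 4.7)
The plan is to argue by induction on $m$, taking the base case $m = 2$ to be exactly equation \eqref{eq:base}. (The case $m = 1$ is trivial since every multi-index sum in the target identity is empty and the formula collapses to $a^1_0 + 3 a^1_1$.)

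For the inductive step, suppose the identity holds for $m$ summands, set $S_m := \sum_{i=1}^{m} a^i$, and read off from the inductive formula the ternary decomposition $S_m \equiv s_0 + 3 s_1 \pmod 9$ with $s_0 \equiv \sum_{i \leq m} a^i_0 \pmod 3$ and $s_1$ equal to the mod-$3$ combination of the four multi-index sums appearing in \eqref{eq:ind}. Applying \eqref{eq:base} to the pair $(S_m, a^{m+1})$ yields
\begin{align*}
S_{m+1} = (s_0 + a^{m+1}_0) - 3 s_0 a^{m+1}_0 - 3\bigl(s_0^2 a^{m+1}_0 + s_0 (a^{m+1}_0)^2\bigr) + 3(s_1 + a^{m+1}_1).
\end{align*}
The remaining work is to substitute these expressions, expand $s_0^2 \equiv \sum_i (a^i_0)^2 + 2 \sum_{i<j \leq m} a^i_0 a^j_0 \pmod 3$, and sort the resulting pieces into the four buckets of \eqref{eq:ind}. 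Term by term, each new contribution either extends an existing sum from $\{1,\ldots,m\}$ to $\{1,\ldots,m+1\}$ (for example, $-3 s_0 a^{m+1}_0$ supplies the missing $j = m+1$ slice of the quadratic bucket, and the expansion of $-3(s_0^2 a^{m+1}_0 + s_0(a^{m+1}_0)^2)$ supplies the missing $j = m+1$ slices of the bucket $-3 \sum_{i<j} ((a^i_0)^2 a^j_0 + a^i_0 (a^j_0)^2)$) or produces a residue that combines with an inductive term.

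The one delicate point, which I flag as the principal (minor) obstacle, is the sign alignment in the cubic bucket. The naive expansion of $s_0^2 a^{m+1}_0$ produces a coefficient of $2$ on the cross term $\sum_{i<j\leq m} a^i_0 a^j_0 a^{m+1}_0$; only after applying the mod-$3$ reduction $2 \equiv -1$ inside the parenthesised factor (legitimate because this factor is multiplied by the outer $3$ and we work modulo $9$) does the sign flip correctly, so that the residue $+3 \sum_{i<j\leq m} a^i_0 a^j_0 a^{m+1}_0$ merges with the inductive $3 \sum_{i<j<k\leq m} a^i_0 a^j_0 a^k_0$ to produce the full $3 \sum_{i<j<k\leq m+1} a^i_0 a^j_0 a^k_0$. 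Once this sign bookkeeping is handled consistently in every bucket that touches $s_0^2$, all pieces line up with the claimed formula for $m+1$ summands and the induction closes.
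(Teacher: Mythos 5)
Your proof is correct and follows essentially the same route as the paper's: induction on $m$ with equation \eqref{eq:base} as the base case, applying \eqref{eq:base} to the pair $(S_m, a^{m+1})$ and expanding $s_0^2$ to redistribute the cross terms into the four buckets. Your handling of the coefficient $2$ on the cross term (i.e.\ $-6 \equiv +3 \pmod 9$) is exactly the step the paper glosses over with ``expanding the square of $\sum_i a_i$ yields the desired result,'' so your bookkeeping is, if anything, more explicit than the original.
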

\begin{proof}
	The proof follows by induction on $m$.
	The base case for $m=2$ follows from eq. \ref{eq:base}.

	Suppose we have proved eq. \ref{eq:ind} for $m' < m$.
	If we add another element $b = b_0 + 3b_1$, then the non-trivial part emerges from the sum of the $0$-components.
	We have
	\begin{align*}
		\sum_{i=1}^{m'}a^i + b =
		&\left(\sum_{i=1}^{m'}a^i_0 + b_0 \right)\\
    -&3\left(\sum_{i<j} a^i_0 a^j_0 + \sum_{i}a^i_0 b_0\right)\\
    -&3\left(\left(\sum_{i}a^i_0\right)^2b_0 + a^i_0b_0^2\right)\\
		+&\left(\sum_{i<j} (a^i_0)^2a^j_0 + a^i_0(a^j_0)^2\right)\\
    +&3\left(\sum_{i<j<k}a^i_0a^j_0a^k_0\right) + 3\left(\sum_i a^i_1 + b_1\right)~.
	\end{align*}
	Expanding the square of $\sum_i a_i$ yields the desired result.
\end{proof}

\begin{theorem}
  \label{thm:genBraHaah3}
  Let $\h \in \field_p^{m \times n}$ be a tri-orthogonal matrix whose rows are labelled $\{h^a\}_{a=1}^{m}$.
	$\C(\h) = \css(X,\h_0; Z, \g)$ be the tri-orthogonal code obtained from $\h$.
	For $1 \leq a \leq K$, $\epsilon_a := \sum_i (h_i^a) \pmod{9}$.
  The transversal physical gate $U_{2,1}^{\otimes n}$ performs the following transversal logical gate:
	\begin{align*}
		 U_{2,1}^{\otimes n} \encket{u} = \bigotimes_{a=1}^{k} \left(\overline{U_{2,1}}\right)^{\epsilon_a} \encket{u}~.
	\end{align*}
\end{theorem}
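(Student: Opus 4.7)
The plan is to mirror the proof of Theorem~\ref{thm:genBraHaah} but to carry the calculation modulo $9$ rather than modulo $p$, leveraging Lemma~\ref{lem:additionMod9} to decompose the mod-$9$ arithmetic arising from the precision-$2$ phase of $U_{2,1}$. Since $U_{2,1}\ket{j}=\exp(2\pi i\, j/9)\ket{j}$ with $j$ interpreted as its integer lift in $\{0,1,2\}$, the transversal gate acts on a computational basis state $\ket{f}$ (with $f\in\field_3^n$) by the phase $\exp\bigl(2\pi i \sum_i [f_i]/9\bigr)$, where $[\cdot]$ denotes the standard representative. Hence it suffices to show that for every $f$ of the form $\sum_{a=1}^m u_a h^a$ appearing in the superposition for $\encket{u}$ (where the $u_a$ for $a>k$ parametrise the $\H_0$-coset), the integer $\sum_i [f_i]$ equals $\sum_{a=1}^k [u_a]\epsilon_a$ modulo $9$, independently of the coset choice.

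The central step is to apply Lemma~\ref{lem:additionMod9} to $\sum_i [f_i]$, taking each summand $[f_i]\in\{0,1,2\}\subset\integer_9$ with zero upper ternary digit. This rewrites $\sum_i[f_i]\pmod 9$ as $\sum_i f_i - 3\sum_{i<j}f_if_j - 3\sum_{i<j}(f_i^2 f_j + f_i f_j^2) + 3\sum_{i<j<k}f_if_jf_k$, each sub-sum being evaluated modulo $3$. I would then substitute $f=\sum_a u_a h^a$ and expand multilinearly, exactly as in the proof of Theorem~\ref{thm:genBraHaah}. Tri-orthogonality in its triply-even form (i.e.\ $\sum_i h^a_i h^b_i h^c_i\equiv 0\pmod 3$ for all triples, including with repeated indices) together with biorthogonality eliminates every cross-term; only the diagonal contributions $\sum_a u_a^t \sum_i (h^a_i)^t$ survive, and Fermat's identity $x^3=x$ in $\field_3$ collapses the $t\in\{1,2,3\}$ pieces into expressions linear in the $u_a$ and in the weights $\sum_i h^a_i$.

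Finally I would reassemble the surviving pieces and recognise them as precisely the mod-$9$ expansion obtained by applying Lemma~\ref{lem:additionMod9} once more, this time to $\epsilon_a = \sum_i[h^a_i]\pmod 9$, yielding $\sum_i[f_i]\equiv\sum_a[u_a]\epsilon_a\pmod 9$. The $a>k$ summands drop out because the rows $h^a\in\h_0$ satisfy $\sum_i(h^a_i)^2\equiv 0\pmod 3$, which together with the triply-even property forces those phase contributions to vanish modulo $9$. The main obstacle lies in this final bookkeeping step: because Lemma~\ref{lem:additionMod9} couples terms of different order through its $\pm 3(\cdots)$ corrections, one must track the carries meticulously and verify that every piece depending on the coset representative in $\H_0$ cancels modulo $9$, not merely modulo $3$. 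This is the essential new difficulty relative to the $p\geq 5$ argument of Theorem~\ref{thm:genBraHaah}, where the entire calculation occurs within the single field $\field_p$ and no lift arithmetic is required.
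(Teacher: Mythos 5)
Your overall strategy---reduce the mod-$9$ phase to mod-$3$ data via Lemma~\ref{lem:additionMod9} and then invoke tri-orthogonality---is the same as the paper's, but you apply the lemma in the wrong direction, and this breaks the key cancellation step. You decompose the outer sum $\sum_i [f_i]$ over the \emph{coordinate} index $i$, so the correction terms produced by the lemma are $\sum_{i<j} f_i f_j$, $\sum_{i<j}(f_i^2 f_j + f_i f_j^2)$ and $\sum_{i<j<k} f_i f_j f_k$. After substituting $f_i = \sum_a u_a h^a_i$, these expand into sums such as $\sum_{i<j} h^a_i h^b_j$, i.e.\ products of matrix entries taken at \emph{distinct} coordinates. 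Tri-orthogonality (and the triply-even property) only controls sums that are diagonal in the coordinate index, namely $\sum_i h^a_i h^b_i$ and $\sum_i h^a_i h^b_i h^c_i$, so your claim that tri-orthogonality together with biorthogonality ``eliminates every cross-term'' is false for your decomposition: for instance $\sum_{i\neq j} h^a_i h^b_j = \left(\sum_i h^a_i\right)\left(\sum_j h^b_j\right) - \sum_i h^a_i h^b_i$, which leaves the uncontrolled product of row sums even after biorthogonality kills the second term. One could in principle push your route through using elementary-symmetric-function identities to re-express $\sum_{i<j} f_i f_j$ and $\sum_{i<j<k} f_i f_j f_k$ in terms of power sums $\sum_i f_i^t$, but that is a substantially different and unverified bookkeeping exercise, and your proposal does not supply it.

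The paper instead applies Lemma~\ref{lem:additionMod9} coordinate-wise to the inner sum over the \emph{row} index: for each fixed $i$ it writes $f_i = \sum_a u_a h^a_i \pmod 9$ and expands that sum of $m$ summands in ternary. The resulting correction terms, once summed over $i$, are exactly $\sum_i h^a_i h^b_i$, $\sum_i (h^a_i)^2 h^b_i$ and $\sum_i h^a_i h^b_i h^c_i$ --- precisely the quantities that bi- and tri-orthogonality annihilate --- leaving only the diagonal contributions $\sum_a u_a \epsilon_a$. You should redo your central step with the lemma applied over $a$ rather than over $i$; as written, the proposal has a genuine gap at exactly the point where tri-orthogonality is supposed to enter.
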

\begin{proof}
  Let $f \in \rowspan{h^a}_{a=1}^{m} \subseteq \field_{3}^{n}$ be a vector in the image of $\H$.
  The action of transversal $U_{2,1}$ on $\ket{f}$ can be expressed as
  \begin{align}
    U_{2,1}^{\otimes n}\ket{f} &= \prod_{i=1}^{n}\exp\left(\frac{2\pi i}{9}f_i \right)\ket{f}\\
                   &= \exp\left( \frac{2\pi i}{9} \sum_{i=1}^{n}f_i \right)\ket{f}~.
  \end{align}
  By assumption, we may express $f = \sum_{a} u_{a} h^{a} \pmod{9}$ for some constants $\{u_a\}_a \in \field_3$.
  We express this sum in a ternary as
  \begin{align*}
    f_i &= \sum_{a} u_a h^{a}_{i} \pmod{9}\\
      &= \sum_{a} u_a h^{a}_{i} \pmod{3}\\
             &- 3\sum_{a<b} \big[
                        u_au_b h^{a}_i h^{b}_i + 
                        (u_a)^2 u_b (h^{a}_{i})^2 h^{b}_i + \\
                        &u_a (u_b)^2 h^{a}_{i} (h^{b}_i)^2 \big] \pmod{3}  \nonumber \\
             &- 3\left[\sum_{a<b<c}
             u_a u_b u_c h^{a}_{i}h^{b}_{i} h^{c}_{i} \pmod{3}
                \right]~.
  \end{align*}
  This is especially convenient because each term in the sum above is expressed $\field_3$, the field over which tri-orthogonality is defined.
  This is proved in lemma \ref{lem:additionMod9}.
  We may use definition \ref{def:triorthogonalMatrix} of tri-orthogonality to simplify this expression and obtain
  \begin{align}
    \sum_{i=1}^{n} f_i &= \sum_{a=1}^{k} u_{a} \epsilon_a~.
  \end{align}

  Hence it follows that
  \begin{align}
    U_{2,1}^{\otimes n}\ket{f} &= \bigotimes_{a=1}^{k} \exp\left( \frac{2\pi i}{9} u_{a}\epsilon_a \right)\ket{f}~,
  \end{align}
  which implies the desired result.
\end{proof}

\subsection{$\css(X,\h_0;Z,\g)$ is a valid code}
\label{subsec:validCode}
\validCode
\begin{proof}
  The logical $Z$ operators corresponding to $\h_1$ are going to commute with the $X$ stabilizer generators corresponding to $\h_0$ because of condition \ref{item:biorthog}.
  Note that without loss of generality, we can always replace any row in $h \in \h_1$ with $\alpha h$ for $\alpha \in \integer_p$ such that we modify just the logical $X$ operators.
  By doing so, we can force the $X$ and $Z$ logical operators to obey the canonical commutation relations as $[\logX^{a},\logZ^{b}] = \delta_{a,b}$.

  Following an argument identical to that of lemma 1 of \cite{bravyi2012magic}, we can show that the vectors of $\h_1$ are independent and that they do not overlap with $\h_0$.
  Any vector $f \in \H_1$ can be expressed as $\sum_a x_a h^{a}$ for some $x_a \in \field_p$.
  It follows from the orthogonality condition that if $f = 0$, then each $x_a$ must be identically $0$.
  Similarly if we assume that a $u \in \H_0$ was in $\h_1$, then we run into a contradiction for the same reason.
  Therefore the dimension of the code is $k$.

  We have chosen these matrices such that $\H_0 \subseteq \G$.
  Therefore, the $X$ distance $d_X$ is at least as much as the $Z$ distance $d_Z$ since
  \begin{align}
    d_Z = \min_{u \in \H_1 \setminus \G} (u) \leq \min_{u \in \H_1 \setminus \H_0} (u) = d_X~.
  \end{align}
  This is the distance of the code.
\end{proof}

\subsection{Tri-orthogonality from triply-even codes}
\label{subsec:triplyEvenToTriorthogonal}

\triplyEvenToTriorthogonal*
\begin{proof}
	Since $\tC^{\perp}$ is triply-even, it follows that for any three rows $\th^a, \th^b, \th^c \in \tC^{\perp}$, that $|\th^a * \th^b * \th^c| = 0 \pmod{p}$ by definition.
	If $1 \leq a < b < c \leq m$, then the same is true if we restrict these vectors to the last $n$ locations as they do not overlap on the first $k$ locations.
	This in turn implies that $|h^a * h^b * h^c| = 0 \pmod{p}$ for $h^a, h^b, h^c \in \C^{\perp}$.
	It also implies that for $1 \leq a \leq k$, that $|h^a * h^a * h^a| = 1 \pmod{p}$.
	Since the all $1$s vector is in $\tC^{\perp}$, it follows that if $a < b$ that $|h^a * h^b| = 0 \pmod{p}$ and that for $1 \leq a \leq k$, that $|h^a * h^a| = -1 \pmod{p}$.
\end{proof}

\end{document}